\documentclass[preprint , 
]{revtex4-1}
\usepackage{graphicx}
\usepackage{amsfonts}
\usepackage{amssymb}
\usepackage{amsthm}
\usepackage{array}
\usepackage{amsmath}
\usepackage{verbatim} 
\usepackage{hyperref}
\usepackage{color}
\usepackage{bbold}
\usepackage{epstopdf}
\usepackage{mathtools}
\usepackage{enumerate}
\usepackage{subcaption}
\usepackage{color}
\usepackage{qcircuit}

\newtheorem{proposition}{Proposition}
\newtheorem*{proposition*}{Proposition}

\newtheorem{theorem}{Theorem}
\newtheorem*{theorem*}{Theorem}

\newtheorem*{corollary*}{Corollary}

\newtheorem{lemma}{Lemma}

\newcommand{\ket}[1]{\left\vert#1\right\rangle}
\newcommand{\bra}[1]{\left\langle#1\right\vert}

\newcommand{\cnot}[1]{U^\mathrm{CNOT}_{ #1 } }

\def\bra#1{\langle #1|}
\def\ket#1{\left|#1 \right>}

\def\Tr{\mbox{Tr}}

\begin{document}
\title{Quantum error correction assisted quantum metrology without entanglement}
\author{Kok Chuan Tan}
\email{bbtankc@gmail.com}
\author{S. Omkar}
\author{Hyunseok Jeong}
\email{jeongh@snu.ac.kr}
\affiliation{Center for Macroscopic Quantum Control \& Institute of Applied Physics, Department of Physics and Astronomy, Seoul National University, Seoul, 08826, Korea}

\begin{abstract}
In this article we study the role that quantum resources play in quantum error correction assisted quantum metrology (QECQM) schemes. We show that there exist classes of such problems where entanglement is not necessary to retrieve noise free evolution and Heisenberg scaling in the long time limit. Over short time scales, noise free evolution is also possible even without any form of quantum correlations. In particular, for qubit probes, we show that whenever noise free quantum metrology is possible via QECQM, entanglement free schemes over long time scales and correlation free schemes over short time scales are always possible.
\end{abstract}

\maketitle

\section{Introduction}

Quantum information science has gained prominence as an area of research in the recent decades. One of the key promises of the field is that the the quantum regime contains intrinsic advantages over classical theories that can be exploited for a variety of informational tasks. A promising area of study which has gained considerable attention recently is the application of quantum error correction techniques to enhance the precision of quantum metrology~\cite{Kessler2014, Arrad2014, Dur2014}. Quantum metrology concerns itself with the precise estimation of some unknown physical parameter, but the precision of such tools often require the preparation of nonclassical quantum states that are sensitive to decoherence effects~\cite{Huelga1997, Dobrzanski2009, Escher2011}. Quantum error correction thus offers the promise of enhancing precision by reducing the amount of noise acting on the system.

Another key concern in quantum information is the study of the differences between quantum and classical theories, leading to the development of a theory of quantum resources. Examples of quantum resources include entanglement~\cite{Horodecki2001} and quantum coherence~\cite{Streltsov2017}. Quantum entanglement is at present a well established quantum resource with many applications such as cryptography~\cite{Ekert1991}, teleportation~\cite{Bennett1991} and superdense coding~\cite{Bennett1992}. In comparison, the resource theory of quantum coherence is a recent theoretical development, with applications in topics as diverse as quantum macroscopicity~\cite{Yadin2015, Kwon2016}, quantum optics~\cite{Bagan2016, Tan2017} and quantum metrology~\cite{Tan2018}. It is worth nothing that entanglement and coherence are not entirely separate quantum resources, since entangled states generally contains coherence, though the converse is not necessarily true\cite{Streltsov2015, Tan2016, Tan2018-2}.

In this article, we will examine the problem of the quantum resources that are necessary for quantum error correction protocols to succeed while simultaneously allowing for quantum enhanced metrology~\cite{Sekatski2017, Zhou2018}. Interestingly, we find that there exist regimes where this can occur without the presence of quantum entanglement, thus requiring us to invoke more general notions of nonclassicality such as quantum discord~\cite{Ollivier2001,Henderson2001} in order to account for the success such protocols. This joins a list of known applications for quantum discord in quantum information
~\cite{Cavalcanti2011, Datta2008, Chuan2012, Davic2012, Bobby2014}. Quantum discord was also considered previously in various other specialized metrological scenarios
~\cite{Modi2011, Cable2016, Girolami2014,  Braun2018}.  

We also show that in the extremal case of short interaction times, product states containing zero quantum correlation, but nonzero quantum coherence is sometimes sufficient to generate nontrivial Fisher information in a noise free manner. For qubit probes in particular, we prove that whenever quantum error correction assisted protocols are possible, then an entanglement free protocol over long time scales, or a quantum correlation free protocol over short time scales is also possible.

\section{Preliminaries}
Here, we review some basic notions concerning nonclassical quantum states that will be used in the paper. A more detailed description of quantum metrology, and the role of quantum error correction in metrology, will be provided in the next section.

First, we define the notion of coherence. Let $\rho$ be the density matrix of a quantum state. Then for a fixed basis $\{ \ket{i} \}$, if $\rho$ is not diagonal with respect to this basis, then we say that the state is coherent, or that the state contains coherence. 

Second, a pure, bipartite quantum state of the form $\ket{\psi}_1\bra{\psi}\otimes \ket{\phi}_2\bra{\phi}$ is referred to as a product state. A quantum density matrix $\rho$ that is expressible as a convex sum of product states $\rho = \sum_i p_i \ket{\psi_i}_1\bra{\psi_i}\otimes \ket{\phi_i}_2\bra{\phi_i}$ is called a separable state. Furthermore, if a state $\rho$ is not separable, then we say that the state is entangled.

We now introduce some notations. We will denote the canonical Pauli matrices on the $m$th qubit as $X_m$, $Y_m$ and $Z_m$ respectively. The computational basis refers to the basis $\{ \ket{0}, \ket{1} \}$, from which we can define the states $\ket{+} \coloneqq  \frac{1}{\sqrt{2}}(\ket{0}+ \ket{1})$ and $\ket{-} \coloneqq  \frac{1}{\sqrt{2}}(\ket{0} - \ket{1})$. The unitary performing a CNOT operation between the $m$th and $n$th qubits is denoted $U^{\mathrm{CNOT}}_{mn}$ where the first subindex $m$ is the control qubit, i.e. $U^{\mathrm{CNOT}}_{mn}\ket{0}_m \ket{\psi}_n = \ket{0}_m \ket{\psi}_n$ and $U^{\mathrm{CNOT}}_{mn}\ket{1}_m \ket{\psi}_n = \ket{1}_m X_n\ket{\psi}_n$.

\section{Error Correction in the Sequential Scheme for quantum metrology}

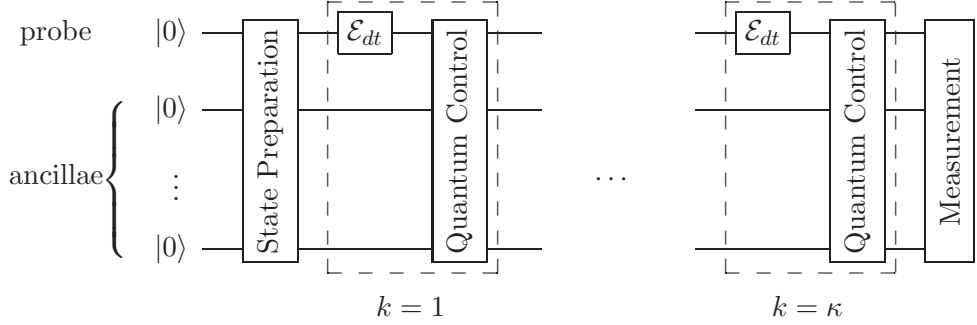
\begin{figure}[t]
\mbox{
\Qcircuit @C=1.4em @R=1.5em {
&\lstick{\mathrm{probe\;\;\;\;\;\;\;}}&\lstick{\ket{0}} & \multigate{3}{\rotatebox{90}{State Preparation}} & \gate{\mathcal{E}_{dt}} 
& \multigate{3}{\rotatebox{90}{Quantum Control}} &\qw & \nghost{} & \nghost{} &  \gate{\mathcal{E}_{dt}} & \multigate{3}{\rotatebox{90}{Quantum Control}} & \multigate{3}{\rotatebox{90}{Measurement}}\\
&&\lstick{\ket{0}} & \ghost{\rotatebox{90}{State Preparation}} & \qw& \ghost{\rotatebox{90}{Quantum Control}}& \qw &\nghost{} & \nghost{} & \qw & \ghost{\rotatebox{90}{Quantum Control}} & \ghost{\rotatebox{90}{Measurement}}\\
&&\lstick{\vdots\;} & \nghost{\rotatebox{90}{State Preparation}} &  \nghost{}& \nghost{\rotatebox{90}{Quantum Control}}& \nghost{} & \cdots & \nghost{} & \nghost{} & \nghost{\rotatebox{90}{Quantum Control}} & \nghost{\rotatebox{90}{Measurement}}\\
&&\lstick{\ket{0}} & \ghost{\rotatebox{90}{State Preparation}} & \qw & \ghost{\rotatebox{90}{Quantum Control}}& \qw & \nghost{} & \nghost{} & \qw & \ghost{\rotatebox{90}{Quantum Control}} & \ghost{\rotatebox{90}{Measurement}}
\inputgroupv{2}{4}{0.5em}{2.2em}{\mbox{ancillae\;\;\;\;}} 
\gategroup{1}{5}{4}{6}{0.7em}{--}
\gategroup{1}{10}{4}{11}{0.7em}{--}\\
&&&&&\lstick{k=1}&&&&&\lstick{k=\kappa}&&
}
}
\caption{The sequential scheme of quantum metrology. After state preparation, the probe is allowed to evolve for a short period $dt$. This evolution is represented by the quantum map $\mathcal{E}_{dt}$ which includes the interaction with some signal Hamiltonian, as well as contributions from noise. The ancillae are assumed to be perfectly shielded and noise free. This is followed by an instantaneous quantum control operation on probe+ancillae. The process is repeated for a total of $\kappa$ rounds, at the end of which a measurement is performed. The total interrogation time is $t=\kappa dt$.  
}
\label{fig::sequentialScheme}
\end{figure}

We will primarily consider error correcting strategies within the framework of the sequential scheme for quantum metrology. In the sequential scheme, an experimenter has access to a probe which can be initialized into any quantum state $\ket{\psi}$. This probe is subject to a Hamiltonian interaction of the form $H = \theta G$ which encodes a signal onto the probe. $G$ shall be referred to as the generator, to distinguish it from $H$. The experimenter will also have access to any number of noiseless ancillary particles, and the ability to perform accurate and fast quantum gates on the total probe-ancillae system. Notably, it is assumed that only the probe state interacts with the Hamiltonian and is affected by environmental noise. The quantum gates themselves are presumed to be ideal and instantaneous, while the ancillae are noiseless and perfectly shielded.

The objective of the experimenter is to obtain a measurement of the unknown quantity $\theta$ given the generator $G$. The parameter $\theta$ describes the strength of the interaction between the probe and the Hamiltonian $H$. In a noiseless scenario, the ultimate precision of this measurement is given by the Cram{\'e}r-Rao bound~\cite{Helstrom1976, Braunstein1994} $\delta \theta \geq \frac{1}{\sqrt{\nu F(\ket{\psi}, G)}}$, where $\nu$ is the number of times the experiment is repeated and $ F(\ket{\psi}, G)$ is the Fisher information quantity. In the quantum regime, it is possible for the measurement precision to achieve a scaling of $\sim\frac{1}{t}$. This is known as Heisenberg scaling (HS). However, quantum noise can diminish this scaling factor to $\sim\frac{1}{\sqrt{t}}$, which is known as the standard quantum limit (SQL). 

In this article, we will assume the noise is Markovian and the probe evolves according to the time homogeneous Lindblad equation. In its diagonal form, this is described by the Lindlad master equation~\cite{Lindblad1976, Nielsen2010} $$\frac{d\rho}{dt} = -i[H,\rho]+\sum_k (L_k\rho L^\dag_k - \frac{1}{2} \{ L_k^\dag L_k, \rho \}),$$ where $L_k$ describes the noisy part of the evolution and are called Linblad operators or jump operators.

The goal of the experimenter is to combat the effects of noise using the tools at his disposal. Since the experimenter has access to fast quantum gates, the experimenter may split the total interaction time $t$ into a total of $\kappa = \frac{t}{dt}$ rounds each of lasting a sufficiently short period of time $dt$, with $k= 1, \ldots, \kappa$ denoting the $k$th round. Within each of these rounds, the experimenter can implement a quantum error correction (QEC) scheme via fast quantum gates between the  ancillae and the probe, with the goal of (i) removing the noise component of the time evolution, and (ii) retrieving nontrivial noiseless evolution. In order to achieve nontrivial, noiseless evolution, the experimenter must carefully choose an error correction procedure that is able to correct the errors represented by the Lindblad operators $L_k$ whilst preserving the HS scaling. Figure~\ref{fig::sequentialScheme} illustrates the process. 

Such a strategy involves the use of full and fast quantum control (FFQC), and is sometimes referred to as FFQC assisted metrology. However, FFQC assisted metrology is in fact the most general framework for quantum error correction assisted quantum metrology considered thus far, with other scenarios existing as special cases of such strategies~\cite{Sekatski2017}. As such, in this article we will collectively refer to such strategies as quantum error correction assisted quantum metrology (QECQM) schemes. In~\cite{Zhou2018}, it was shown that as long as $H$ is not representable as a linear sum of the operators $\openone$, $L_k$, $L^\dag_k$ and $L^\dag_jL_k$ then QECQM is always possible. The set of all possible linear sums of such operators is known as the Lindblad span, which we denote by $\mathcal{S}$. Conversely, when this condition is not satisfied, then the experimenter can at best achieve SQL scaling, regardless of the strategy he employs.

\section{Noiseless evolution over short time scales}

In this section, we demonstrate that there exist Lindblad operators $L_k$, and corresponding nontrivial generators $G$ such that (i) HS is achieved over total interaction time $t$, and (ii) the probe-ancilla system is uncorrelated during the round $k=1$. Recalling that each round lasts for a period of $dt$, this suggests that over short time scales, it is possible for QECQM to be successful without quantum correlations.  

This is summarized by the following proposition:

\begin{proposition} \label{prop::rank2}
Let $\mathcal{S} = \mathrm{span}\{ \openone, L_k, L^\dag_k, L^\dag_kL_j \}$ be the Lindblad span. For any generator $G$, we can always write the decomposition $G= G_\parallel + G_\perp$ where  $G_\parallel$ and $G_\perp$ are the parallel and perpendicular components of $G$ w.r.t. the Lindblad span $\mathcal{S}$ and the operator inner product $\langle A,B \rangle \coloneqq \Tr(A^\dag B)$.

If $G_\perp$ is rank 2, then $G_\perp \propto \ket{c_0}\bra{c_0} - \ket{c_1}\bra{c_1} $ for some orthogonal vectors $\{ \ket{c_0}, \ket{c_1} \}$, and HS can always be achieved via QECQM for every timescale $t$. 

Furthermore, the initial probe-ancilla state can always be chosen such that it is a product state with no quantum correlations present during the round $k=1$. QECQM without quantum correlations is therefore possible over sufficiently a short timescale $dt$.
\end{proposition}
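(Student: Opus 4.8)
The plan is to reduce the claim to the structure theory of the Knill–Laflamme error correction conditions restricted to a two-dimensional code, combined with the observation that the relevant signal generator $G_\perp$ acts diagonally in a basis naturally associated to that code. First I would establish the spectral claim: since $G_\perp$ is the component of a Hermitian operator orthogonal to $\mathcal{S}$, and $\openone \in \mathcal{S}$, we have $\Tr G_\perp = \langle \openone, G_\perp\rangle = 0$; a traceless rank-$2$ Hermitian operator has eigenvalues $\{\lambda, -\lambda\}$ (on its support) for some $\lambda > 0$, so $G_\perp = \lambda(\ketbra{c_0}{c_0} - \ketbra{c_1}{c_1})$ with $\{\ket{c_0},\ket{c_1}\}$ orthonormal. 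This is the easy part. For the HS claim I would invoke the result of~\cite{Zhou2018} quoted above: because $G_\perp \neq 0$, the full generator $G$ is not in the Lindblad span $\mathcal{S}$, hence QECQM achieving HS is possible for every $t$; I should also note that the achievable rate of noiseless evolution is governed by $G_\perp$ (equivalently by $\lambda$), so the two-dimensional code $\mathcal{C} = \mathrm{span}\{\ket{c_0},\ket{c_1}\}$ is the natural one to use — $G_\perp$ restricted to $\mathcal{C}$ is proportional to a logical $Z$, which generates a nontrivial phase and hence nonzero Fisher information.

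The substantive part is the "no quantum correlations during round $k=1$" claim. Here I would use the freedom in choosing the initial probe-ancilla state. The QEC scheme of~\cite{Zhou2018}, applied to the code $\mathcal{C}$, comes with an encoding isometry $V$ mapping a logical qubit into the probe-ancilla Hilbert space; the logical codewords are entangled states $V\ket{0}_L, V\ket{1}_L$ of probe and ancillae in general. The key point is that the QEC recovery and the induced logical evolution only depend on which code space $\mathcal{C}$ the probe component lives in, not on the ancilla, so we are free to initialize the probe directly in a state supported on $\mathcal{C}$ — for instance the probe state $\ket{\psi_0} = \tfrac{1}{\sqrt 2}(\ket{c_0} + \ket{c_1})$, which is maximally sensitive to the logical-$Z$ evolution generated by $G_\perp$ — and take the ancillae in any fixed product state $\ket{a}$. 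Then the initial state $\ket{\psi_0}\otimes\ket{a}$ is a product state, hence contains no quantum correlations (zero discord in either direction), yet the probe sits inside the code space and picks up the full noiseless phase evolution during the first round of duration $dt$. Over that first round the error-correcting control has not yet had to act nontrivially in a way that entangles probe and ancillae (the probe alone carries a valid codeword up to the errors, which are what the control fixes at the end of the round), so the state remains correlation-free throughout round $k=1$. One then iterates the standard scheme for rounds $k = 2, \ldots, \kappa$ using the full encoded codewords to retain HS over the whole interrogation time $t$; only the short timescale $dt$ is claimed to be correlation-free.

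The main obstacle I anticipate is making precise the assertion that a rank-$2$ code suffices for the~\cite{Zhou2018} construction to go through while keeping round $1$ correlation-free — i.e., checking that $G_\perp$ being rank $2$ really does let us take the code to be exactly $\mathcal{C} = \mathrm{span}\{\ket{c_0},\ket{c_1}\}$ and that the probe-only state $\ket{\psi_0}$ (no ancilla support) is a legitimate initialization under the scheme, rather than needing a genuinely entangled codeword from the start. This amounts to verifying the Knill–Laflamme conditions for $\mathcal{C}$ against the error set $\{\openone, L_k\}$ (possibly after absorbing $G_\parallel$, which lies in $\mathcal{S}$, into the noise that QEC removes) and confirming that $P_\mathcal{C} G P_\mathcal{C} \propto P_\mathcal{C} G_\perp P_\mathcal{C} \neq 0$ so the logical generator is nontrivial; if the naive rank-$2$ code fails Knill–Laflamme one must enlarge it with ancillary dimensions, and then the argument that round $1$ can still be made product-state needs the extra observation that the probe marginal of the enlarged codeword can be taken to already lie in a fixed correctable subspace. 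I expect this to work for qubit probes and more generally whenever $G_\perp$ has rank $2$, which is exactly the hypothesis.
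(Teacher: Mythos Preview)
Your spectral argument for $G_\perp \propto \ketbra{c_0}{c_0}-\ketbra{c_1}{c_1}$ is correct and essentially identical to the paper's. The gap is precisely where you flag it as an ``obstacle'' and then merely ``expect'' it to work: you never actually produce a code that simultaneously satisfies Knill--Laflamme for the Lindblad errors and contains a product probe--ancilla state. Your proposed probe-only code $\mathcal{C}=\mathrm{span}\{\ket{c_0},\ket{c_1}\}$ does \emph{not} satisfy Knill--Laflamme in general. Orthogonality of $G_\perp$ to each $L_k$ and $L_k^\dag L_j$ only fixes the \emph{diagonal} entries, e.g.\ $\bra{c_0}L_k\ket{c_0}=\bra{c_1}L_k\ket{c_1}$; it says nothing about the off-diagonal terms $\bra{c_0}L_k\ket{c_1}$, which must vanish for $\Pi_\mathcal{C} L_k \Pi_\mathcal{C}\propto\Pi_\mathcal{C}$. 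So initializing the probe in $\tfrac{1}{\sqrt2}(\ket{c_0}+\ket{c_1})$ with an arbitrary ancilla $\ket{a}$ is not in general a correctable configuration, and your claim that ``the QEC recovery only depends on the probe component'' is unjustified.

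The paper closes this gap with a concrete two-step construction you are missing. First, it uses the ancilla to kill the off-diagonal KL terms: the code $\{\ket{c_0}\ket{+},\ket{c_1}\ket{-}\}$ satisfies $\bra{c_0}\bra{+}(O\otimes\openone)\ket{c_1}\ket{-}=0$ automatically because $\braket{+}{-}=0$, while the diagonal KL condition reduces exactly to $\Tr(G_\perp O)=0$, which holds for every $O\in\mathcal{S}$. This code has nontrivial effective generator but entangled codewords. Second --- and this is the real trick --- the ``odd-parity'' code $\{\ket{c_0}\ket{-},\ket{c_1}\ket{+}\}$ satisfies the same KL conditions with the same effective generator (its recovery is the even-parity recovery conjugated by $\openone\otimes Z$), and the \emph{union} of the two codes is the full tensor product $\mathrm{span}\{\ket{c_0},\ket{c_1}\}\otimes\mathrm{span}\{\ket{+},\ket{-}\}$. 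One builds a single recovery map on this enlarged codespace by summing the parity-projected recoveries, and then the product state $\ket{+}\ket{+}$ lies in the code, is not an eigenvector of the effective generator, and therefore serves as the desired correlation-free initial state for round $k=1$. Your proposal reaches the doorstep of this construction but does not supply it.
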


\begin{proof}
First, we observe that if $G_\perp$ is rank 2, then $G$ is not an element of $\mathcal{S}$ and thus not contained within the Lindblad span. As such, we know that there must exist some QECQM strategy that enables HS. Furthermore, $G_\perp$ is Hermitian since $G$ is Hermitian, and must be perpendicular to $\openone$, which is an element of $\mathcal{S}$. As such, we must have $\Tr(\openone G_\perp) = 0$. The only rank 2, Hermitian and traceless operator has the form $\lambda \ket{c_0}\bra{c_0} - \lambda \ket{c_1}\bra{c_1}$ for some orthogonal  $\{ \ket{c_0}, \ket{c_1} \}$ and $\lambda >0$, so $G_\perp \propto \ket{c_0}\bra{c_0} - \ket{c_1}\bra{c_1}$, which establishes the first part of the theorem.

For convenience, let us define $\ket{0} \coloneqq \ket{c_0}$ and $\ket{1} \coloneqq \ket{c_1}$ and $\ket{\pm} \coloneqq \frac{1}{\sqrt{2}}(\ket{0} \pm \ket{1})$. Let us choose the codespace  defined by $\{ \ket{0} \ket{+}, \ket{1} \ket{-} \}$ with corresponding projections $\Pi_C \coloneqq \ket{0}\bra{0}\otimes \ket{+} \bra{+} + \ket{1}\bra{1}\otimes \ket{-} \bra{-}$. We can verify that $\bra{0} \bra{+} O \otimes \openone \ket{1} \ket{-} = 0 $ and that $\bra{0} \bra{+} O \otimes \openone \ket{0} \ket{+} - \bra{1} \bra{-} O \otimes \openone \ket{1} \ket{-} = \Tr(G_\perp O)/\lambda$. Therefore, if $O$ is substituted with $L_k$ or $L_k^\dag L_j$, we have that $\Pi_C L_k \Pi_C = \mu_k \Pi_C$ and $\Pi_C L_k^\dag L_j \Pi_C = \mu_{k,j} \Pi_C$ which are exactly the error correction conditions so errors generated by $L_k$ or $L_k^\dag L_j$ are always correctable for any vector within this codespace. 

Furthermore, if we substitute $O$ with $G$, we get $\bra{0} \bra{+} G \otimes \openone \ket{0} \ket{+} - \bra{1} \bra{-} G \otimes \openone \ket{1} \ket{-} = \Tr(G_\perp ^2)/\lambda = 2 \lambda >0 $, so within the codespace defined by $\{ \ket{0} \ket{+}, \ket{1} \ket{-} \}$, the effective generator $\Pi_C G \Pi_C$ is nontrivial (i.e. it is not a constant). As such, within this codespace, the evolution is noiseless and nontrivial, and thus, HS within the QECQM framework can be achieved.

We observe that, by repeating similar arguments as above, the codespace defined by $\{ \ket{0} \ket{-}, \ket{1} \ket{+} \}$ and corresponding projectors $\Pi'_C \coloneqq \ket{0}\bra{0}\otimes \ket{-} \bra{-} + \ket{1}\bra{1}\otimes \ket{+} \bra{+}$ will similarly allow for errors generated by $L_k$ or $L_k^\dag L_j$ to be corrected and achieve HS.

Let $\{K_i\}$ be the Kraus operators~\cite{kraus} representing the error correcting map for $\Pi_C$. By definition, it must be able to correct errors of the form $L_k$ or $L_k^\dag L_j$, so $K_i O\otimes \openone \ket{0}\ket{+} \propto \ket{0}\ket{+}$ and $K_i O \otimes \openone \ket{1}\ket{-} \propto \ket{1}\ket{-}$ for every $i$, and $O$ can be any $L_k$ or $L_k^\dag L_j$. 

It is also clear that for $\Pi_C'$, the corresponding error correcting map is just $\{K_i'= (\openone \otimes Z) K_i (\openone \otimes Z) \}$ where $Z$ is the standard Pauli Z operator. Since 
$\{ \ket{0} \ket{+}, \ket{1} \ket{-} \}$ and $\{ \ket{0} \ket{-}, \ket{1} \ket{+} \}$, differs only by a phase flip on the ancilla, the projectors $\Pi_C$ and $\Pi_C'$ may be thought of as projections onto the even and odd parity subspaces. As such, if we define $K^{total}_i \coloneqq \Pi_C  K_i \Pi_C + \Pi'_C K_i' \Pi'_C$, every error generated by $L_k$ or $L_k^\dag L_j$ acting on the first qubit within the combined codespace $\{ \ket{0} \ket{+}, \ket{1} \ket{-} \} \cup \{ \ket{0} \ket{-}, \ket{1} \ket{+} \}$ can be corrected. The corresponding projector for this codespace is just $\Pi_C' \coloneqq (\ket{0}\bra{0}+ \ket{1}+\bra{1}) \otimes (\ket{+}\bra{+}+ \ket{-}+\bra{-})$.

We now compute the effective generator, and find that $\Pi_C' G \otimes \openone \Pi_C' = \lambda \ket{0}\bra{0} \otimes (\ket{+}\bra{+}+ \ket{-}+\bra{-}) - \lambda \ket{1}+\bra{1} \otimes (\ket{+}\bra{+}+ \ket{-}+\bra{-}) + \mathrm{constant} + \mathrm{off\;diagonal\;elements} $, so the effective generator is indeed nontrivial since the leading diagonal elements are not all equal. To achieve noiseless evolution and HS, we just need to choose from within the combined code space any vector that is not an eigenvector of $\Pi_C' G \otimes \openone \Pi_C'$. The equal superposition $\frac{1}{\sqrt{2}}(\ket{0}+\ket{1})\ket{+} = \ket{+} \ket{+}$ will suffice, as the only way this can possibly be an eigenvector of $\Pi_C' G \otimes \openone \Pi_C'$ is by having $\lambda = - \lambda = 0$, which is impossible since $\lambda > 0$. Therefore, this initial probe-ancilla state will generate nontrivial time evolution, and the state will be separable during round $k=1$. This completes the proof.

\end{proof}

Proposition~\ref{prop::rank2} is a technical result that establishes that whenever $G_\perp$ is rank 2, a product state is sufficient to successfully perform QECQM over time $dt$. The following lemma expands upon this observation by describing a class of Lindblad operators for which a generator of this type is guaranteed to exist. 

\begin{lemma} \label{lem::exist}
If the noisy evolution is described by a single Lindblad operator $L$ (i.e. the noise is rank 1), then there always exists some generator $G$ s.t. $\Tr(G L) = 0$, $G$ is traceless and rank 2.
\end{lemma}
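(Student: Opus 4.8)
\emph{Proof plan.} The plan is to exhibit $G$ directly in the form $G=\ket{c_0}\bra{c_0}-\ket{c_1}\bra{c_1}$ for a suitable pair of orthonormal vectors $\ket{c_0},\ket{c_1}$ in the probe Hilbert space $\mathcal H$, which we take to have dimension $d\ge 2$ (otherwise no rank-$2$ operator exists). Any $G$ of this form is automatically Hermitian, traceless, and of rank exactly $2$, so the entire problem collapses to a single scalar requirement: since $\Tr(GL)=\bra{c_0}L\ket{c_0}-\bra{c_1}L\ket{c_1}$, it suffices to find an orthonormal pair with $\bra{c_0}L\ket{c_0}=\bra{c_1}L\ket{c_1}$. (Note that $G$ Hermitian together with $\Tr(GL)=0$ also forces $\Tr(GL^\dagger)=0$, so such a $G$ is in fact orthogonal to $\mathrm{span}\{\openone,L,L^\dagger\}$; only orthogonality to $L^\dagger L$ is not guaranteed, which is why the statement isolates the single condition $\Tr(GL)=0$.)

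The next step is to reduce to a two-dimensional problem. Fix any two-dimensional subspace $V\subseteq\mathcal H$ with orthonormal basis $\{\ket{e_0},\ket{e_1}\}$, let $P_V$ be the orthogonal projector onto $V$, and let $M$ be the $2\times 2$ matrix with entries $M_{jk}=\bra{e_j}L\ket{e_k}$, i.e.\ the matrix of the compression $P_V L P_V$. Since $\bra{c}L\ket{c}=\bra{c}P_V L P_V\ket{c}$ for every unit $\ket c\in V$, it is enough to find an orthonormal basis of $V$ in which $M$ has equal diagonal entries; as those always sum to $\Tr M$, this amounts to producing a unit vector $\ket{c_0}\in V$ with $\bra{c_0}L\ket{c_0}=\tfrac12\Tr M$ and then taking $\ket{c_1}\in V$ orthogonal to it.

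The crux — and the only step with genuine content — is that $\tfrac12\Tr M$ is actually attained by some unit vector of $V$, i.e.\ that $\tfrac12\Tr M$ lies in the numerical range $W(M)=\{\bra c M\ket c:\ket c\in V,\ \|\ket c\|=1\}$. Here I would invoke the Toeplitz--Hausdorff theorem: $W(M)$ is convex, and $\tfrac12\Tr M=\tfrac12(\bra{e_0}M\ket{e_0}+\bra{e_1}M\ket{e_1})$ is the midpoint of the two points $\bra{e_0}M\ket{e_0},\bra{e_1}M\ket{e_1}\in W(M)$, hence lies in $W(M)$. (Equivalently, the elliptical range theorem states that $W(M)$ is an elliptical disk centred at $\tfrac12\Tr M$.) This delivers $\ket{c_0}$, hence $\ket{c_1}$, hence $G$; then $\Tr(G)=0$, $\mathrm{rank}\,G=2$, and $\Tr(GL)=\bra{c_0}L\ket{c_0}-\bra{c_1}L\ket{c_1}=\tfrac12\Tr M-\tfrac12\Tr M=0$ all follow at once.

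I expect this convexity input to be the main obstacle. A naive attempt takes $\ket{c_0}=\cos\tfrac\theta2\ket{e_0}+e^{i\phi}\sin\tfrac\theta2\ket{e_1}$ with $\theta=\pi/2$, which equalizes the diagonal of $M$ only in the special case $|M_{01}|=|M_{10}|$; forcing a general, non-normal $M$ into equal-diagonal form by a bare parametrization requires a short case analysis over $\theta$ and $\phi$, whereas the numerical-range argument disposes of it in one line. I would also keep two points explicit: the hypothesis $d\ge2$, and the observation that rank-one-ness of $L$ as an operator is never used — what ``rank-$1$ noise'' really provides is a \emph{single} Lindblad operator, hence a single complex orthogonality constraint, which is exactly matched by the degrees of freedom available in a rank-$2$ traceless Hermitian $G$.
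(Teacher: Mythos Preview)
Your argument is correct. Both you and the paper build $G=\ket{c_0}\bra{c_0}-\ket{c_1}\bra{c_1}$, but the way the pair $\ket{c_0},\ket{c_1}$ is produced differs. The paper first exploits the gauge freedom $L\mapsto L-\tfrac{1}{d}\Tr(L)\,\openone$ (which leaves the Lindblad dynamics invariant) to make $L$ traceless, then invokes the standard fact that any traceless square matrix is unitarily similar to a matrix with zero diagonal; picking any two vectors from that basis gives $\bra{c_0}L\ket{c_0}=\bra{c_1}L\ket{c_1}=0$ directly. You instead bypass the gauge shift entirely: you compress $L$ to an arbitrary two-dimensional subspace and use Toeplitz--Hausdorff convexity to hit the midpoint $\tfrac12\Tr M$ of the numerical range, which forces the two diagonal entries to coincide (though not at zero). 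Your route is slightly more self-contained --- it works for $L$ as given, without appealing to the Lindblad gauge freedom or to the full $d\times d$ zero-diagonal lemma --- while the paper's route is marginally shorter once that lemma is granted. In effect, your numerical-range step \emph{is} the $2\times 2$ base case of the very lemma the paper cites.
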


\begin{proof}
First, we recall that the Lindblad operator is unique up to the addition of a constant. Therefore, we can always assume that $L$ is a traceless matrix. Any square, traceless matrix is unitarily similar to a zero diagonal matrix~\cite{Horn1985}. As such, we are guaranteed that there exists some orthonornal basis $\{ \ket{c_i} \}$ such that $\bra{c_i} L \ket{c_i} = 0$ for every $i$. Let us choose $G = \ket{c_0}\bra{c_0}-\ket{c_1}\bra{c_1}$. We see that $G$ is rank 2 and traceless. We can then directly verify that $\Tr(GL) = 0$ since $\bra{c_i} L \ket{c_i} = 0$, which proves the required result.
\end{proof}

Using Proposition~\ref{prop::rank2} and Lemma~\ref{lem::exist}, we now prove that for the qubit case, you can always choose the initial probe ancilla state to be a product state so long as HS is achievable.

\begin{theorem} \label{thm::qubit}
For a qubit probe subject to Markovian noise, if HS is achievable via QECQM, then you can always choose the initial probe-ancilla state such that it is a product state in round $k=1$. A product state is therefore sufficient to perform QECQM over time $dt$.

\end{theorem}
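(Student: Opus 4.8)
The plan is to derive Theorem~\ref{thm::qubit} as a direct consequence of Proposition~\ref{prop::rank2}, by showing that for a qubit probe the hypothesis ``$G_\perp$ is rank~$2$'' holds automatically as soon as HS is achievable. First I would invoke the characterization of~\cite{Zhou2018}: HS is attainable via QECQM precisely when the signal Hamiltonian $H=\theta G$ fails to lie in the Lindblad span $\mathcal{S}$; since $0\in\mathcal{S}$ we may assume $\theta\neq 0$, so this is equivalent to $G\notin\mathcal{S}$, i.e.\ to $G_\perp\neq 0$ in the decomposition $G=G_\parallel+G_\perp$.

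Next I would pin down the structure of $G_\perp$. The generating set $\{\openone,L_k,L_k^\dag,L_k^\dag L_j\}$ of $\mathcal{S}$ is closed under Hermitian conjugation, so $\mathcal{S}$ (and hence $\mathcal{S}^\perp$) is a $\dag$-invariant subspace, and the orthogonal projector onto $\mathcal{S}$ with respect to $\langle A,B\rangle=\Tr(A^\dag B)$ commutes with $\dag$; therefore $G$ Hermitian forces $G_\parallel$ and $G_\perp$ Hermitian. Since $\openone\in\mathcal{S}$ we also get $\Tr(G_\perp)=\langle\openone,G_\perp\rangle=0$. Now comes the only genuinely qubit-specific step: on a two-dimensional Hilbert space, a nonzero, traceless, Hermitian operator has eigenvalues $\pm\lambda$ with $\lambda>0$, hence is exactly rank~$2$. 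Thus $G_\perp$ is rank~$2$, and Proposition~\ref{prop::rank2} applies directly, yielding both HS at every timescale $t$ and an initial probe-ancilla product state (concretely the $\ket{+}\ket{+}$ identified there) that carries no quantum correlations during round $k=1$. Lemma~\ref{lem::exist} plays the complementary role of showing the situation is not vacuous: for rank-$1$ noise it exhibits a traceless rank-$2$ generator orthogonal to $L$, so genuine examples of qubit QECQM with product initial states exist.

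I do not anticipate a serious obstacle: the theorem is essentially a corollary once one observes that rank-$2$-ness of $G_\perp$ is forced in dimension two. The points requiring care are (i) the bookkeeping that makes $\mathcal{S}$ a $\dag$-closed subspace containing $\openone$, which is what lets $G_\perp$ inherit Hermiticity and tracelessness, and (ii) correctly invoking the ``only if'' half of~\cite{Zhou2018}, so that the hypothesis ``HS achievable'' really does deliver $G_\perp\neq 0$. One should also remark that nothing changes if the probe is accompanied by several ancillae rather than one, since the codespace used in Proposition~\ref{prop::rank2} lives on the probe together with a single ancillary qubit and any additional ancillae can be kept in a fixed product state.
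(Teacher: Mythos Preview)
Your proof is correct and follows the same overall strategy as the paper: establish that $G_\perp$ has rank~2 and then invoke Proposition~\ref{prop::rank2}. The difference is in how the rank-2 step is argued. The paper imports from~\cite{Zhou2018} the structural facts that, for a qubit probe admitting HS, the noise is necessarily described by a single Lindblad operator with (traceless part) $L\propto\vec a\cdot\vec\sigma$ for a \emph{real} vector $\vec a$, and that the relevant span reduces to $\mathrm{span}\{\openone,L\}$; it then writes $G=\vec b\cdot\vec\sigma$, fixes $\vec a=\hat z$, and identifies $G_\perp=b_xX+b_yY$ explicitly as a Pauli operator in the plane orthogonal to $\vec a$, from which rank~2 is read off. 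Your argument is more abstract and more economical: $\dag$-invariance of $\mathcal{S}$ together with $\openone\in\mathcal{S}$ forces $G_\perp$ to be Hermitian and traceless, and any nonzero traceless Hermitian $2\times2$ matrix has eigenvalues $\pm\lambda\neq 0$ and hence rank~2. This sidesteps the need to determine $\mathcal{S}$ or the form of $L$ at all, whereas the paper's explicit Pauli parametrization has the compensating advantage that it is reused verbatim in the proof of Theorem~\ref{thm::qubitsep}.
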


\begin{proof}
It is known that for a qubit probe, the only case where HS is achievable via QECQM is when the noisy evolution is described by a single Lindlad operator, and that QECQM is achievable only when $G \notin \mathrm{span}\{\openone, L\}$~\cite{Zhou2018}. If more Lindblad operators are necessary to describe the noise, then the Lindblad span will span the entire operator space of a qubit, and HS can never be achieved via QECQM since every Hamiltonian will be an element of the Lindblad span. From Proposition~\ref{prop::rank2} and Lemma~\ref{lem::exist}, we already know that \textit{some} generator $G$ will exist such that a product probe-ancilla state is possible in round $k=1$.

It remains to be shown that for every $G$ acting on a qubit that does not belong to the Lindblad span, a product probe-ancilla state is possible in round $k=1$. Since we can always assume that $L$ is traceless, it is always proportional to $\vec{a}\cdot\vec{\sigma}$ where $\vec{a}$ is a real 3 dimensional vector and $\vec{\sigma}$ is the usual vector of Pauli matrices~\cite{Zhou2018}. Similarly, since the the addition of a constant to the Hamiltonian does not change the time evolution, we can assume that the generator $G$ is also traceless, so $G = \vec{b} \cdot \vec{\sigma}$ for some real vector $\vec{b}$. Without any loss in generality, let us assume $\vec{a} = \hat{z}$. Then we can write $G = G_\parallel+G_\perp$ where $G_\parallel = b_z Z$ and $G_\perp = b_xX+b_yY$, where $X,Y,Z$ are the usual Pauli matrices. It is clear that $G_\perp$ is proportional to a Pauli matrix in the direction $(b_x,b_y, 0)$ and so must be rank 2. From Proposition~\ref{prop::rank2}, we see that a separable probe-ancilla state is possible in round $k=1$, which proves the required result.
\end{proof}

We can therefore conclude that over short time scales, quantum correlations are not a necessary prerequisite for QECQM. This is especially true for qubit probes, due to Theorem~\ref{thm::qubit}. We also note that the observations in Proposition~\ref{prop::rank2} and Lemma~\ref{lem::exist} are not necessarily limited to the qubit case, so such examples also exist in higher dimensions.  

\section{Example: Qubit probe with perpendicular noise.}

Here, we illustrate the case by examining a qubit probe with noise that is perpendicular to the Hamiltonian and generator. For simplicity, we will assume that $G=Z$ and $L=X$. We see that in this case, the generator $G$ is rank 2, and the noise is rank 1 so it can be described using only one Lindblad operator. From Theorem~\ref{thm::qubit}, we know for certain that we can always choose a product state as our initial probe-ancilla state. In this case the choice is especially simple. We will adopt the usual convention of letting the eigenvectors of the $Z$ Pauli matrix determine the computational basis. For the probe state, let us choose it to be  $\ket{+}_1$.  

In this case, the Lindblad master equation reads $$\frac{d\rho}{dt} = -i[Z,\rho]+ (X\rho X -  \rho ).$$ The substitution of $\rho = \ket{+}\bra{+}$ gives us $$d\rho = -i[Z,\rho]dt+ (X\rho X -  \rho )dt = -i[Z,\rho]dt, $$ which describes noiseless evolution over short timescales. As such, for a sufficiently short interaction time $dt$, no probe-ancilla correlations are necessary and the only quantum resource required is the local coherence of the probe state, which is necessary in order to generate nontrivial Fisher information. 

\section{Noiseless evolution over long timescales}

To achieve HS over every time scale, some form of quantum correlations is necessary.  However, we will demonstrate in this section that this does not have to be in the form of entanglement. 

In order to do that, we first describe a particular error correcting procedure. Consider the product state $\ket{+}_1\ket{0}_2$. We can perform a CNOT operation $\cnot{1,2}$ with qubit 1 acting as the control.  This leads to the maximally entangled state $\ket{\Phi^+}_{12} = \frac{1}{\sqrt{2}} (\ket{00}_{12} + \ket{11}_{12})$ after acting on the product state. Suppose we perform a bit flip operation $X_1$ on qubit 1 (the probe). This results in the state  $\ket{\psi^+}_{12} = \frac{1}{\sqrt{2}} (\ket{10}_{12} + \ket{01}_{12})$. The application of another identical CNOT operation will result in $\cnot{1,2}\ket{\psi_+}_{1,2} = \ket{+}_1\ket{1}_2$. For reasons that will be clear in the next paragraph, we also apply the operation $\cnot{2,1}$, which leaves the state unchanged. By observing the final state, we see that a Pauli $X_1$ operation is propagated from the probe to the ancilla. 

We repeat the argument for the product state $\ket{-}_1\ket{0}_2$. Applying the first CNOT leads to $\ket{\Phi^-} = \frac{1}{\sqrt{2}}(\ket{00}_{12}-\ket{11}_{12})$. After the Pauli error $X_1$, we get $- \ket{\Psi^-} = \frac{-1}{\sqrt{2}}(\ket{01}_{12}-\ket{10}_{12})$. Applying $\cnot{1,2}$, we get the $-\ket{-}_1\ket{1}_2$. Finally, we apply the operation $\cnot{2,1}$ and observe that this corrects the additional negative phase in front. The final state is $\ket{-}_1\ket{1}_2$. Again, we can see that a Pauli error $X_1$ on the probe is propagated to the ancilla. 

Let us choose the encoding procedure to be $E = \cnot{1,2}$, and the decoding procedure to be $D = \cnot{2,1}\cnot{1,2}$. With this, we are ensured that a Pauli $X_1$ error from the probe will always be propagated to the ancilla for any quantum superposition of the states $\ket{+}_1\ket{0}_2$ and $\ket{-}_1\ket{0}_2$.

Let us consider what happens when a Pauli $Z_1$ error occurs on the probe instead. Following the same encoding$\rightarrow$error$\rightarrow$decoding process described above, we find that $\ket{+}_1\ket{0}_2 \rightarrow \ket{-}_1\ket{0}_2$ and $\ket{-}_1\ket{0}_2 \rightarrow \ket{+}_1\ket{0}_2$. In summary, the error correcting procedure that was just described will always propagate a Pauli $X_1$ (bit flip) error to the ancilla, while Pauli $Z_1$ (phase flip) error is not propagated and remains on the probe qubit.

We now describe how the above protocol may be used to implement an entanglement free QECQM protocol. In the sequential scheme, we can initialize probe-ancilla in some state $\rho_1 \otimes\ket{0}_2\bra{0}$ before the start of every round, perform the encoding $E$, allow for free evolution of the probe, and end the round by performing the decoding $D$. If the evolution of the probe is described by a master equation of the form $$\frac{d\rho_1}{dt} = -i[Z_1,\rho_1]+ (X_1\rho_1 X_1 -  \rho_1 ),$$ we see that the noise is generated by the Pauli matrix $X_1$, while the signal is generated by the Pauli matric $Z_1$. This allows us to exploit the error propagation properties of the error correcting protocol. Due to the propagation of $X_1$ (bit flips) and the nonpropagation of $Z_1$ (phase flips) to the ancilla, the effective evolution of the probe is described by the noiseless evolution $$\frac{d\rho_1}{dt} = -i[Z_1,\rho_1],$$ while the evolution of the ancilla within each round is described entirely by the noisy part of the evolution $$\frac{d\rho_2}{dt} = (X_2\rho_2 X_2 -  \rho_2).$$ 

At the beginning of every round, we can repeat the error correcting process process by using a fresh ancilla initialized in the state $\ket{0}$.

Let us consider the probe state $\ket{\theta}_{1} \coloneqq \cos \theta \ket{0}_1 + \sin \theta \ket{1}_1$ and the ancilla $\ket{0}_2$. After performing the encoding $E$, the resulting state is the entangled pure state $\ket{\psi_\theta}_{1,2} \coloneqq \cos \theta \ket{0,0}_{1,2} + \sin \theta \ket{1,1}_{1,2}$. We now apply the Vidal-Tarrach Theorem~\cite{Vidal1999}, which states that every state of the form $\rho_\theta \coloneqq \frac{1}{1+s}\ket{\psi_\theta}_{1,2}\bra{\psi_\theta} + \frac{s}{4(1+s)} \openone$ is separable so long as $s \geq 2\sin(2\theta)$. For the choice $s=2$, $\rho_\theta$ will always separable for every $\theta$. We therefore see that over one round of the QECQM protocol, the preparation of the initial state $\frac{1}{1+s}\ket{\theta}_1\bra{\theta} \otimes \ket{0}_2\bra{0} + \frac{s}{4(1+s)} \openone_{1,2}$ where $s=2$ will ensure that the state is completely separable during the round. 

More generally, suppose the QECQM protocol is to be performed for  a total of $\kappa$ number of rounds and we use a fresh ancilla in the state $\ket{0}$ at the start of  every round of the error correction protocol. As a result, a probe-ancillae state of the form $\frac{1}{1+s}\ket{\theta}\bra{\theta} \otimes (\ket{0}\bra{0})^{\otimes \kappa} + \frac{s}{2^{\kappa+1}(1+s)} \openone$ will suffice to ensure that the total probe-ancilla state is never entangled throughout the entire process. The Fisher information is in this case $\mathcal{F}(\rho_1, G) = t^2 I $ where $$
I = 2 \sum_{i,j} \frac{(\lambda_i - \lambda_j)^2}{\lambda_i + \lambda_j}  |\bra{i}G\ket{j}|^2,$$ $\lambda_i$ and $\ket{i}$ are the eigenvalues and eigenvectors of $\rho_1$, and $\rho_1 = \frac{1}{1+s}\ket{\theta}_1\bra{\theta} + \frac{s}{4(1+s)} \openone_1$. This gives the required quadratic scaling for the Fisher information and demonstrates that entanglement is not a necessary resource to achieve noiseless time evolution and HS.

It is in fact always possible to implement an entanglement free protocol for every qubit probe that permits HS through QECQM. This is summarized in the following result:

\begin{theorem} \label{thm::qubitsep}
For a qubit probe subject to Markovian noise, if HS is achievable via QECQM, then you can always choose the probe-ancilla state and QEC protocol such that it is separable in every round $k=1, \ldots, \kappa = t/dt$, where $\kappa$ is the total number of rounds in the QECQM protocol, $t$ is the total interaction time, and $dt$ is the duration of each round.
\end{theorem}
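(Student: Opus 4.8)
The plan is to reduce the general qubit case to the explicit perpendicular-noise construction already developed in this section, and then take care of separability with a Vidal--Tarrach admixture. First I would invoke Theorem~\ref{thm::qubit} together with the facts recorded in its proof: if HS is achievable for a qubit probe, the noise is a single Lindblad operator $L$, and after adding a constant (which does not change the dynamics) and passing to a convenient single-qubit basis by a fast, noiseless rotation $V$ on the probe, we may take $L \propto X$ and $G = \beta X + g Z$ with $g > 0$, where $gZ$ is the component of $G$ orthogonal to the Lindblad span $\mathrm{span}\{\openone, X\}$ and $\beta$ is the (possibly nonzero) parallel component. Since separability is preserved under the local unitary $V \otimes \openone$, and since conjugating a QEC protocol by $V \otimes \openone$ turns a protocol for $(VGV^\dagger, VLV^\dagger)$ into one for $(G,L)$, this reduction loses no generality; it places us exactly in the situation treated earlier in this section, now with $\beta$ allowed to be nonzero.

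Next I would run the CNOT ``error-transfer'' protocol described above verbatim: in each round, encode with $E = \cnot{1,2}$ (probe $=$ qubit $1$, a fresh ancilla $=$ qubit $2$), let the probe evolve under the master equation for time $dt$, and decode with $D = \cnot{2,1}\cnot{1,2}$, discarding nothing. The structural facts established above for $\beta = 0$ carry over unchanged: on the code space $\{\ket{\Phi^+}_{12},\ket{\Phi^-}_{12}\}$ one has $\Pi_C (X_1 \otimes \openone)\Pi_C = 0$, while $\Pi_C (Z_1 \otimes \openone)\Pi_C$ is the logical $\bar X$ and hence nontrivial. Consequently the $X_1$-type noise is propagated entirely onto the ancilla, the $\beta X_1$ piece of the Hamiltonian likewise has no effect on the code space and only leaks onto the ancilla, and the probe's effective evolution in each round is the noiseless $\tfrac{d\rho_1}{dt} = -i g \theta [Z_1,\rho_1]$. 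Choosing a suitable (e.g.\ equatorial) initial probe state then gives probe Fisher information $\propto g^2 t^2$, i.e.\ HS, by the same computation as above.

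It remains to make the protocol entanglement-free. The only entangling operation is the CNOT encoding, which couples the probe to the single ancilla active in the current round; no entanglement is ever produced between the probe and the idle or spent ancillae. I would therefore apply the Vidal--Tarrach theorem exactly as above: preparing the initial probe--ancillae state as a suitable admixture of the maximally mixed state ensures that in every round the probe-plus-active-ancilla state is of Vidal--Tarrach form, hence separable, while the admixture is chosen so that the probe's reduced state is depolarized only by a factor independent of the number of rounds. Since the maximally mixed component carries no signal and is undamaged by the unitaries or by the dissipator, the $t^2$ scaling of the Fisher information survives the twirl, so HS is retained with a separable state throughout.

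The main obstacle is the joint requirement in the last step: one must keep the twirl strong enough that the full probe--ancillae state is genuinely separable at every instant of all $\kappa = t/dt$ rounds, yet weak enough --- uniformly in $\kappa$, hence as $dt \to 0$ --- that the Fisher information prefactor does not collapse and HS is preserved. The way through is to exploit that at any instant the entanglement lives on only two qubits (the probe and the active ancilla), so that a bounded, $\kappa$-independent admixture localized to that pair suffices, rather than a global twirl whose effect per pair would degrade with $\kappa$; one then has to check that the accumulated states on the spent ancillae, together with the coherent leakage generated by the $\beta X_1$ term, do not reintroduce probe--ancilla correlations that this bounded admixture cannot absorb. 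I expect this bookkeeping to be the real work of the proof.
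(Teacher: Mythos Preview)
Your proposal follows essentially the same route as the paper's proof: reduce, without loss of generality, to $L\propto X$ and $G=G_\parallel+G_\perp$ with $G_\parallel\propto X$, $G_\perp\propto Z$; then apply the CNOT encode/decode protocol of the preceding section so that all $X$-type contributions (both the noise and $G_\parallel$) are shunted to the ancilla while $G_\perp$ drives the probe noiselessly; finally invoke the Vidal--Tarrach admixture already described to guarantee separability in every round. The paper's own proof is in fact briefer than yours: it simply points back to ``the specific error correcting protocol described previously for perpendicular noise'' and to the separable initial state $\tfrac{1}{1+s}\ket{\theta}\!\bra{\theta}\otimes(\ket{0}\!\bra{0})^{\otimes\kappa}+\tfrac{s}{2^{\kappa+1}(1+s)}\openone$ with fixed $s$, noting that the probe's reduced state, and hence the Fisher-information prefactor, is $\kappa$-independent. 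The ``bookkeeping'' you flag in your last paragraph (whether a bounded admixture really keeps the full probe--ancillae state separable at every instant, including the effect of the $G_\parallel$ leakage onto the ancilla) is not worked out in the paper either; it is absorbed into the reference to the earlier construction rather than treated as an additional obstacle.
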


\begin{proof}
This proof largely follows a modified version of the argument presented in Theorem~\ref{thm::qubit}, in conjunction with the error correcting protocol described above.

The only scenario where QECQM is possible for a qubit subject to Markovian noise is when $L \propto \vec{a}\cdot\vec{\sigma}$ where $\vec{a}$ is a real 3 dimensional vector. Similarly, since adding a constant to the generator $G$ does not change the Lindblad master equation, we can assume $G$ is traceless, and $G = \vec{b} \cdot \vec{\sigma}$ for some real vector $\vec{b}$. Without any loss in generality, let us assume $\vec{a} \propto \hat{x}$. Then we can write $G = G_\parallel+G_\perp$ where $G_\parallel = b_xX$ and $G_\perp = b_yY+b_zZ$, where $X,Y,Z$ are the usual Pauli matrices. It is clear that $G_\perp$ is proportional to a Pauli matrix in the direction $(0,b_y, b_z)$ , and since it is perpendicular to $\vec{a} \propto \hat{x}$, we can also assume without any loss in generality that $G_\perp \propto Z$. 

We then see that the contribution by $G_\parallel$ and $L$ are both proportional to $X$, while $G_\perp$ is proportional to $Z$. If we now apply the specific error correcting protocol described previously for perpendicular noise, we see that the contributions proportional to $X$ will propagate to the ancilla leaving $G_\perp$, which is proportional to $Z$, as the effective generator acting on the probe qubit. Since this protocol does not employ any entanglement, this shows that for qubit probes, an entanglement free protocol achieving HS always exists whenever a QECQM protocol achieving HS exists. 
\end{proof}

Theorem~\ref{thm::qubitsep} above demonstrates that in the qubit case, it is always possible to perform an entanglement free protocol provided we are allowed to input a mixed state for the probe. This noiseless evolution will extend to arbitrarily long time scales, given the assumptions of QMQEC. In this context, we can view Theorem~\ref{thm::qubit} as an extremal case that applies over sufficiently brief time scales, where not only is entanglement not a necessary prerequisite, no form of quantum correlations is necessary.

\section{Conclusion}

In this article, we considered the short time scale limit and find that noiseless time evolution (as well as the quadratic scaling of Fisher information over the interaction time), can be recovered over short periods of time without the presence of any form of quantum correlations. In this case, the only form of quantumness that is strictly necessary is quantum coherence, in order to generate nontrivial Fisher information with respect to the effective Hamiltonian driving the evolution of state. 

We also demonstrate that there exist nontrivial QECQM scenarios where the recovery of noiseless time evolution and HS in the long time limit can be achieved via separable states. This is sufficient for us to conclude that a successful QECQM protocol is not predicated on the existence of entanglement. In order to recover HS over long time scales however, some form of quantum correlations does appear to be necessary, so it is natural to suggest that successful QECQM protocols are in fact driven by some more generalized form of quantum correlations such as quantum discord.    

The fact that there exist time scales over which even quantum correlations are not necessary for noiseless evolution further complicates the issue of attributing a single quantum resource to a successful QECQM protocol. The simplest way to resolve this appears to be to treat the issue as 2 separate regimes. If one wishes to attribute a quantum resource to a successful QECQM protocol over \textit{long} interaction times in particular, then generalized quantum correlations appear to provide the answer. On the other hand, if one wishes to attribute a quantum resource to a successful QECQM protocol over \textit{every} time scale, then this necessitates a notion of quantumness even more general than quantum correlations. At present, the leading candidate for this is quantum coherence. We note that in general it is possible to generalize coherence to include notions of quantum correlations such as entanglement and discord.

It is also interesting to note that for qubit probes in particular, if QECQM protocols are feasible, then entanglement free protocols over long time scales, or quantum correlation free protocols over short time scales is always possible (see Theorems~\ref{thm::qubit} and \ref{thm::qubitsep}). At present, it remains an open question if this connection between the feasibility of QECQM and quantum correlations persist in higher dimensions systems. 

We hope that our results will spur further interest into the research of the role of quantum resources in quantum metrology.

\section{Acknowledgements}

This work was supported by the National Research Foundation of Korea (NRF) through a grant funded by the Korea government (MSIP) (Grant No. 2010-0018295) and by the Korea Institute of Science and Technology Institutional Program (Project No. 2E27800-18-P043). K.C. Tan was supported by Korea Research Fellowship Program through the National Research Foundation of Korea (NRF) funded by the Ministry of Science and ICT (Grant No. 2016H1D3A1938100).

\end{document}